\documentclass[aps,pra,10pt,showpacs,amsmath,amssymb,floatfix,superscriptaddress,longbibliography,twocolumn,showkeys]{revtex4-2}
\usepackage{physics}
\usepackage{graphicx}
\usepackage[dvipsnames,svgnames,table]{xcolor}
\usepackage{tcolorbox}
\usepackage{mathrsfs}
\usepackage{tikz}
\usepackage{tikz-3dplot}
\usetikzlibrary{quantikz2}
\usetikzlibrary{arrows.meta, bending, decorations.markings, angles, quotes, calc, 3d, decorations.pathreplacing}
\usepackage[unicode=true,
    pdfusetitle,
    bookmarks=true,
    bookmarksnumbered=false,
    bookmarksopen=false,
    breaklinks=true,
    pdfborder={0 0 0},
    backref=false,
    colorlinks=true,
hypertexnames=false]{hyperref}
\hypersetup{linkcolor=NavyBlue,urlcolor=NavyBlue,citecolor=NavyBlue}
\usepackage{amsthm}
\newtheorem{theorem}{Theorem}

\newcommand{\ii}{\mathrm{i}}
\newcommand{\ee}{\mathrm{e}}

\begin{document}
\title{Quantifying Uhlmann curvature from Yang-Mills-type action density and its implications in quantum multiparameter estimation}
\author{Yi-Lin Ge}
\author{Bing-Shu Hu}
\author{Ling-Yun Deng}
\email{denglingyun@hdu.edu.cn}
\affiliation{School of Sciences, Hangzhou Dianzi University, Hangzhou
310018, China}
\author{Xiao-Ming Lu}
\email{lxm@hdu.edu.cn}
\affiliation{School of Sciences, Hangzhou Dianzi University, Hangzhou
310018, China}
\affiliation{Zhejiang Key Laboratory of Quantum State Control and
    Optical Field Manipulation, Hangzhou Dianzi University, Hangzhou
310018, China}

\begin{abstract}
    The geometry of quantum states has profound implications in quantum multiparameter estimation.
    While the Riemannian structure of quantum state space is well understood, the full understanding of the curvature structure of mixed quantum states is still an open problem.
    Inspired by the Yang-Mills-type action density in non-Abelian gauge theory, we propose a scalar quantifying the Uhlmann curvature and establish its connection to the measurement incompatibility in quantum multiparameter estimation problems.
    We show that this curvature measure is gauge invariant, reparametrization invariant, and vanishes if and only if the Uhlmann curvature vanishes.
    We also explicitly calculate the Uhlmann curvature for the joint estimation of phase and phase diffusion as an example.
\end{abstract}

\keywords{quantum metrology, quantum geometry, quantum parameter estimation, quantum curvature, quantum measurement}

\maketitle

\section{Introduction}

Classical parameter estimation theory focuses on optimizing algorithms for estimating unknown parameters from observed data.
The precision of unbiased estimation in the asymptotic regime is determined by a Riemannian structure, known as the Fisher-Rao metric, on the statistical manifold~\cite{Fisher1922,Fisher1925,Rao1945}.
In the quantum regime, the quantum state space is endowed with an analogous Riemannian structure known as the quantum Fisher information (QFI) metric~\cite{Provost1980,Uhlmann1993,Braunstein1994}, which plays a crucial role in quantum parameter estimation theory~\cite{Helstrom1967,Helstrom1968,Helstrom1976,Paris2009,Liu2019}.
The idea behind such a geometric structure is that parameter estimation can be viewed as a process of distinguishing nearby quantum states on a manifold, and the QFI metric thereon quantifies the distinguishability of quantum states.

Quantum multiparameter estimation is much more complicated than its classical counterpart due to the non-commutativity of quantum observables.
Consequently, the quantum Cram\'er-Rao bound (QCRB) is not always attainable in the multiparameter regime when the optimal measurements for different parameters are incompatible~\cite{Yuen1973,Holevo2011}.
Matsumoto~\cite{Matsumoto1997,Matsumoto2005} conjectured that Uhlmann curvature reflects local properties of the quantum statistical model and proved that, for the faithful state model and pure state model, the QCRB is attainable if and only if the model is free of Uhlmann curvature.
Besides, the Uhlmann geometric phase, which is closely related to the Uhlmann curvature, has been shown to be a useful tool for characterizing phases transition~\cite{Viyuela2014,Silva2021,Hou2024,Wang2025}.
However, less is known about the relationship between the Uhlmann curvature and the measurement incompatibility for general quantum states.

In this work, we analyze the Uhlmann curvature from a geometric perspective and establish its connection to the measurement incompatibility in quantum multiparameter estimation problems.
Inspired by the Yang-Mills-type action density in non-Abelian gauge theory~\cite{Yang1954}, we propose a scalar quantifying the Uhlmann curvature and prove that it is gauge invariant, reparametrization invariant, and vanishes if and only if the Uhlmann curvature vanishes.
We show that the vanishing of this curvature measure is equivalent to the partial commutativity condition (PCC)~\cite{Yang2019b}, which is a necessary condition for the saturation of the QCRB for any state and sufficient when the states are pure or of full rank.
Furthermore, for two-parameter estimation problems with pure states, the curvature measure is proportional to the incompatibility factor in the tradeoff boundary of the estimation precision.

This paper is organized as follows.
In Sec.~\ref{sec:Uhlmann_gauge_theory}, we give a brief review of the Uhlmann connection and curvature for mixed quantum states.
In Sec.~\ref{sec:scalar_curvature}, we propose the scalar curvature measure and analyze its properties.
In Sec.~\ref{sec:quantum_multiparameter_estimation}, we study the implications of the Uhlmann curvature in quantum multiparameter estimation and show its relationship with the measurement incompatibility.
In Sec.~\ref{sec:example}, we explicitly calculate the Uhlmann curvature in the case of a two-parameter estimation problem.
We conclude in Sec.~\ref{sec:conclusion}.

\section{Uhlmann gauge theory}\label{sec:Uhlmann_gauge_theory}

Let us consider a manifold of density operators defined on the Hilbert space \(\mathcal{H}\).
We assume that the density operators have a fixed rank \(r\), ensuring the smoothness of the manifold.
Every density operator $\rho$ can be purified as a normalized vector $\ket{\Psi}$ in a dilated Hilbert space $\mathcal{H}\otimes \mathcal{H}'$ so that \(\tr_\mathrm{anc}\op{\Psi} = \rho\) with $\tr_\mathrm{anc}$ denoting partial trace over the auxiliary Hilbert space $\mathcal{H}'$.
The dimension of \(\mathcal H'\) must be at least the rank \(r\) of the density operators.
The purification states have a gauge degree of freedom.
If \(\ket{\Psi}\) is a purification state of \(\rho\),  so is \(\ket{\Psi'} = (I\otimes U)\ket{\Psi}\), where \(U\) can be any unitary operator acting on \(\mathcal{H}'\).
All unitary operators on \(\mathcal H'\) form the (local) gauge group of purification.

\begin{figure}
    \includegraphics{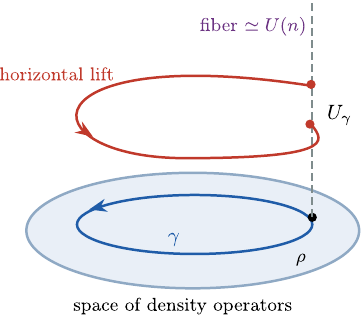}
    \caption{
        Schematic representation of the geometry underlying the Uhlmann connection.
        The density operators form a base manifold, and the purification states form a principal fiber bundle with the unitary group as the structure group.
        For a closed loop in the base manifold, the parallel transport of unitary operators along the loop results in a holonomy transformation \(U_\gamma = \hat P \exp(\oint_\gamma A)\) with \(\hat P\) denoting the path-ordering operator.
    }
    \label{fig:holonomy}
\end{figure}

Physical laws involving mixed states must be gauge invariant.
This can be achieved by establishing the ``connection''  structure on the purification bundle.
Let \(A\) be a connection 1-form~\cite{Frankel2011}, which takes values in the Lie algebra of the gauge group and transforms as
\begin{equation} \label{eq:connection_transform}
    A \to U A U^\dagger  + U \dd U^\dagger,
\end{equation}
under the gauge transformation \(\ket{\Psi} \to (I\otimes U)\ket{\Psi}\).
Note that \(A\) is anti-Hermitian, i.e., \(A^\dagger = - A\).
Given a connection, the covariant derivative \(\ket{D \Psi}\) on purification states is defined as
\begin{align} \label{eq:cov_deriv}
    \ket{D \Psi} := \ket{\dd\Psi} + (I \otimes A) \ket{\Psi},
\end{align}
which is gauge covariant, i.e., \(\ket{D \Psi} \to (I \otimes U) \ket{D \Psi}\).
With this covariant derivative, a gauge-invariant Riemannian metric \((\dd s)^2 = \ip{D \Psi}{D \Psi}\) on the manifold of density operators can be defined.
Besides, another important quantity is the curvature 2-form
\begin{align}
    F := \dd A + A\wedge A,
\end{align}
which is an operator acting on the ancillary Hilbert space $\mathcal{H}'$ and transforms covariantly under gauge transformations as \(F \to U F U^\dagger\).
The curvature 2-form \(F\) manifests the non-commutativity of the covariant derivative.
As shown in Fig.~\ref{fig:holonomy}, any infinitesimal closed loop \(\gamma\) in the base manifold of density operators induces a holonomy transformation \(U_\gamma \approx I + F \delta S \), where \(\delta S\) is the area of the infinitesimal loop.
In other words, a non-vanishing curvature 2-form \(F\) indicates that the unitary operators cannot be parallel transported along the loop in a consistent manner.

The Uhlmann connection~\cite{Uhlmann1986,Uhlmann1991Gauge,Uhlmann2011} is a special choice of \(A\) with which the Riemannian metric is exactly the Bures metric~\cite{Bures1969,Bengtsson2006}.
The Uhlmann connection \(A\) is determined by
\begin{equation} \label{eq:Uhlmann_connection}
    \ket{\dd\Psi}  = (G \otimes I)\ket{\Psi} - (I \otimes A) \ket{\Psi},
\end{equation}
where \(G\) is a Hermitian-operator-valued 1-form acting on \(\mathcal{H}\) and can be solved from the ansatz
\begin{align} \label{eq:rhoG}
    \dd\rho = G\rho + \rho G.
\end{align}
With the Uhlmann connection, the covariant derivative can be expressed as \(\ket{D \Psi} = (G \otimes I)\ket{\Psi}\) and results in the Bures metric~\cite{Bures1969,Bengtsson2006} \((\dd s)^2 = \tr(\rho G^2) = g_{\mu\nu} \dd x^\mu \dd x^\nu\), where
\begin{equation}
    g_{\mu\nu} = \tr(\rho \frac{G_\mu G_\nu + G_\nu G_\mu}{2})
\end{equation}
are the Bures metric tensor under the chosen parameterization of the density operators.
Note that the Bures metric is gauge invariant and thus independent of the choice of purification.

\section{Scalar curvature} \label{sec:scalar_curvature}

Inspired by the Yang-Mills-type action density  in non-Abelian gauge theory~\cite{Yang1954}, we propose a definition of scalar curvature as follows:
\begin{align}
    \mathcal{C}
    := - \frac14 \tr(F_{\mu\nu} F^{\mu\nu}),
\end{align}
where the indices are raised by the Bures metric as
\begin{align}
    F^{\mu\nu} = g^{\mu\alpha} g^{\nu\beta} F_{\alpha\beta},
\end{align}
with \(g^{\mu\nu}\) being the inverse of the Bures metric tensor \(g_{\mu\nu}\), i.e., \(g^{\mu\alpha} g_{\alpha\nu} = \delta^\mu_\nu\).
Note that the repeated indices are summed over according to the Einstein summation convention, and the Bures metric is assumed to be non-degenerate so that the inverse metric \(g^{\mu\nu}\) is well-defined.

The scalar curvature \(\mathcal{C}\) quantifies the local curvature information of the Uhlmann connection into a single scalar quantity.
The index contraction of \(F_{\mu\nu}\) with the inverse metric \(g^{\mu\nu}\) ensures that \(\mathcal{C}\) is invariant under reparametrization of the density operators.
The trace operation ensures that \(\mathcal{C}\) is gauge invariant.
Therefore, \(\mathcal{C}\) quantifies the curvature information of the Uhlmann connection at every point of the parameter space in an intrinsic manner.
The integral of the scalar curvature over the parameter space can be viewed as a global measure of the Uhlmann curvature, which is analogous to the Yang-Mills action in non-Abelian gauge theory.
Like the Yang–Mills action tells nature how much it ``costs'' to have a gauge field present, our scalar curvature $\mathcal{C}$ indicates ``the density of the cost'' of having noncommutativity among the parameters in the quantum model.

This scalar curvature \(\mathcal C\) has the following properties.
First, \(\mathcal{C}\) is gauge invariant, as \(F_{\mu\nu} \to U F_{\mu\nu} U^{-1}\) under any gauge transformation \(U\) and the trace of \(F_{\mu\nu} F^{\mu\nu}\) is invariant under such transformations.
Second, \(\mathcal{C}\) is invariant under coordinate transformation (i.e., reparametrization) of density operators, which is ensured by the contraction of \(F_{\mu\nu}\) with the inverse metric \(g^{\mu\nu}\).
Third, the curvature measure \(\mathcal{C}\) vanishes if and only if the curvature 2-form \(F\) vanishes.
This can be proved as follows.
Since \(\mathcal C\) is invariant under reparametrization, we can choose a local orthonormal coordinate system (by reparameterization) such that \(g_{\mu\nu}\) becomes the identity matrix and thus \(F^{\mu\nu} = F_{\mu\nu}\).
For such a case, we have \(-\frac14 \tr(F_{\mu\nu} F^{\mu\nu}) = \frac14 \tr(F_{\mu\nu}^\dagger F_{\mu\nu})\), which is non-negative and vanishes if and only if \(F_{\mu\nu}\) vanishes.
Noting that \(\mathcal{C}\) is a summation of non-negative terms, so it vanishes only if \(F_{\mu\nu} = 0\) for all pairs of parameters \((\mu,\nu)\), which is equivalent to the vanishing of \(F\).

While the above properties holds for any connection form, we now focus on the Uhlmann connection given by Eq.~\eqref{eq:Uhlmann_connection}.
We express the curvature measure \(\mathcal{C}\) in terms of the spectral decomposition of the density operator and the \(G_\mu\) operators.

\begin{theorem}
    Let \(\rho=\sum_{\ell=1}^r \lambda_\ell \op{\phi_\ell}\) be the spectral decomposition of \(\rho\) with nonzero eigenvalues \(\lambda_\ell\) and corresponding eigenvectors \(\ket{\phi_\ell}\).
    Let us define \(G^\mu = g^{\mu\alpha} G_\alpha\).
    The curvature measure \(\mathcal{C}\) can be expressed as
    \begin{align}
        \mathcal{C} &=
        \sum_{\ell,m=1}^r \frac{\lambda_\ell \lambda_m}{(\lambda_\ell + \lambda_m)^2}  \Theta_{\ell m},
        \label{eq:curvature_measure_spectral_decomposition}
    \end{align}
    with
    \begin{align} \label{eq:Theta}
        \Theta_{\ell m} := - \sum_{\mu\nu} \mel{\phi_\ell}{[G_\mu, G_\nu]}{\phi_m} \mel{\phi_m}{[G^\mu, G^\nu]}{\phi_\ell}.
    \end{align}
\end{theorem}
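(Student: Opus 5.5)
I will work in the operator (amplitude) picture of the purification. Take \(\dim\mathcal H'=r\) and write \(\ket{\Psi}\) as an operator \(W:\mathcal H'\to\mathcal H\) with \(\rho=WW^\dagger\). Then Eq.~\eqref{eq:Uhlmann_connection} becomes
\begin{equation}
\dd W = G W - W \tilde A ,
\end{equation}
where \(\tilde A\) is \(A\) up to a transpose induced by the identification \(\ket{\Psi}\leftrightarrow W\). Since \(\mathcal C\) involves only \(\tr(F_{\mu\nu}F^{\mu\nu})\), which is invariant under transposition and gauge conjugation, I may use \(\tilde F=\dd\tilde A\pm\tilde A\wedge\tilde A\) (sign fixed by the transpose) in place of \(F\). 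By gauge invariance I may also fix, at the point of interest, the gauge \(W=\sqrt{\rho}\) restricted to the support, which is invertible there.

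\textbf{Step 1: an integrability identity for \(W\).} Apply \(\dd\) to \(\dd W=GW-W\tilde A\) and use \(\dd^2W=0\). The mixed terms \(G\,W\wedge\tilde A\) cancel, leaving
\begin{equation}
(\dd G - G\wedge G)\,W = W\,\tilde F .
\end{equation}
Hence \(\tilde F = W^{-1}(\dd G - G\wedge G)W\) on the support. This also fixes the sign convention relating \(\tilde F\) to \(F\).

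\textbf{Step 2: eliminate \(\dd G\).} Apply \(\dd\) to Eq.~\eqref{eq:rhoG} and use \(\dd^2\rho=0\), substituting \(\dd\rho=G\rho+\rho G\) back in. This gives
\begin{equation}
(\dd G)\rho+\rho(\dd G) = (G\wedge G)\rho-\rho(G\wedge G).
\end{equation}
Take \(\mu\nu\) components, with \((G\wedge G)_{\mu\nu}=[G_\mu,G_\nu]\equiv C_{\mu\nu}\), and matrix elements in the eigenbasis \(\{\ket{\phi_\ell}\}\). This yields
\begin{equation}
(\dd G)_{\mu\nu,\ell m}=\frac{\lambda_m-\lambda_\ell}{\lambda_\ell+\lambda_m}\,C_{\mu\nu,\ell m},
\end{equation}
so that
\begin{equation}
(\dd G-G\wedge G)_{\mu\nu,\ell m}=-\frac{2\lambda_\ell}{\lambda_\ell+\lambda_m}\,C_{\mu\nu,\ell m}.
\end{equation}
Conjugating by \(W=\sqrt\rho\) then gives
\begin{equation}
\tilde F_{\mu\nu,\ell m}=-\frac{2\sqrt{\lambda_\ell\lambda_m}}{\lambda_\ell+\lambda_m}\,\mel{\phi_\ell}{[G_\mu,G_\nu]}{\phi_m}.
\end{equation}

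\textbf{Step 3: contract.} Raising indices commutes with taking matrix elements, so \(\tilde F^{\mu\nu}\) has the same form with \(G^\mu=g^{\mu\alpha}G_\alpha\). Then
\begin{equation}
-\tfrac14\tr(\tilde F_{\mu\nu}\tilde F^{\mu\nu})=-\tfrac14\sum_{\ell m}\frac{4\lambda_\ell\lambda_m}{(\lambda_\ell+\lambda_m)^2}\,C_{\mu\nu,\ell m}\,C^{\mu\nu}_{m\ell},
\end{equation}
which is exactly Eq.~\eqref{eq:curvature_measure_spectral_decomposition} with \(\Theta_{\ell m}\) as in Eq.~\eqref{eq:Theta}.

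\textbf{Main obstacle.} I expect the difficulty to lie in the bookkeeping of conventions rather than in the idea. Three points need care: the sign of \(A\wedge A\) after the transpose \(\ket{\Psi}\leftrightarrow W\); the normalization of 2-form components, namely whether \((G\wedge G)_{\mu\nu}\) equals the commutator or half of it; and the restriction to the support of \(\rho\), with \(\dim\mathcal H'=r\) so that \(W\) is invertible. I would fix the conventions first by checking the one-parameter-pair case directly against \(F_{\mu\nu}=\partial_\mu A_\nu-\partial_\nu A_\mu+[A_\mu,A_\nu]\). The overall factor should come out as \(4\cdot\tfrac14=1\); any leftover factor of \(2\) would signal a component-normalization slip.
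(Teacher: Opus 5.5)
Your proposal is correct and follows essentially the same route as the paper. Your Step 1 is the paper's dual relation $(I\otimes F)\ket{\Psi}=(\tilde F\otimes I)\ket{\Psi}$ written in the amplitude picture $W$, with the paper's dual curvature $dG-G\wedge G$. Your Step 2 is its relation $\tilde F\rho+\rho\tilde F^\dagger=0$, and your gauge $W=\sqrt{\rho}$ is its spectral purification $\sum_n\sqrt{\lambda_n}\ket{\phi_n}\otimes\ket{n}$. The only cosmetic difference is that you conjugate to obtain the symmetric, manifestly anti-Hermitian elements $-2\sqrt{\lambda_\ell\lambda_m}/(\lambda_\ell+\lambda_m)$, whereas the paper notes that the $\sqrt{\lambda}$ factors cancel inside the trace and contracts the asymmetric elements $-2\lambda_\ell/(\lambda_\ell+\lambda_m)$ directly.
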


\begin{proof}
    We use the dual curvature 2-form \(\tilde{F} := \dd G - G \wedge G\) to derive Eq.~\eqref{eq:curvature_measure_spectral_decomposition}.
    For any purification state \(\ket{\Psi}\), the dual curvature \(\tilde{F}\)  satisfy the following relations:
    \begin{align}
        (I \otimes F)\ket{\Psi} &= (\tilde{F} \otimes I) \ket{\Psi}, \label{eq:F_dual_relation} \\
        \tilde{F}\rho + \rho \tilde{F}^\dagger &= 0. \label{eq:F_dual_rho_relation}
    \end{align}
    Equation~\eqref{eq:F_dual_relation} can be derived by taking the exterior derivative on both sides of Eq.~\eqref{eq:Uhlmann_connection} and using $\ket{\dd^2\Psi}=0$.
    Equation~\eqref{eq:F_dual_rho_relation} can be derived by taking the exterior derivative on both sides of Eq.~\eqref{eq:rhoG} and using \(\dd^2\rho=0\).

    In terms of the spectral decomposition, we choose the following gauge for the purification state:
    \begin{equation} \label{eq:Psi_spectral_decomposition}
        \ket{\Psi} = \sum_{n=1}^r \sqrt{\lambda_n} \ket{\phi_n} \otimes \ket{n},
    \end{equation}
    where \(\{\ket{n}\}\) is a coordinate-independent orthonormal basis of \(\mathbb{C}^r\).
    It follows from Eq.~\eqref{eq:F_dual_relation} and Eq.~\eqref{eq:Psi_spectral_decomposition} that
    \begin{align}
        \sum_{n=1}^r \sqrt{\lambda_n} \ket{\phi_n} \otimes F \ket{n}
        = \sum_{n=1}^r \sqrt{\lambda_n} \tilde{F} \ket{\phi_n} \otimes \ket{n}.
    \end{align}
    Left-multiplying the above equation by \(\bra{\phi_\ell} \otimes \bra{m}\) and using the orthonormality of the basis, we obtain
    \begin{align}
        \sqrt{\lambda_\ell}\mel{m}{F}{\ell}
        = \sqrt{\lambda_m} \mel{\phi_\ell}{\tilde{F}}{\phi_m}.
    \end{align}
    Therefore, we have
    \begin{align} \label{eq:F_dual_substitution}
        \tr(F_{\mu\nu} F^{\mu\nu}) &= \sum_{\ell,m=1}^r \mel{m}{F_{\mu\nu}}{\ell} \mel{\ell}{F^{\mu\nu}}{m} \nonumber \\
        &=\sum_{\ell,m=1}^r \mel{\phi_\ell}{\tilde{F}_{\mu\nu}}{\phi_m} \mel{\phi_m}{\tilde{F}^{\mu\nu}}{\phi_\ell},
    \end{align}
    where \( \tilde F_{\mu\nu} \) are given by
    \begin{align} \label{eq:F_dual_compenent}
        \tilde{F}_{\mu\nu} &= (\partial_\mu G_\nu - \partial_\nu G_\mu) - [G_\mu, G_\nu].
    \end{align}

    Notice that \(\dd G\) and \(G\wedge G\) are Hermitian and anti-Hermitian, respectively, and must satisfy Eq.~\eqref{eq:F_dual_rho_relation}.
    Left-multiplying the relation Eq.~\eqref{eq:F_dual_rho_relation} by \(\ket{\phi_m}\) and right-multiplying it by \(\ket{\phi_\ell}\), we obtain
    \begin{align}
        \mel{\phi_m}{\dd G}{\phi_\ell}
        =\frac{\lambda_\ell - \lambda_m}{\lambda_\ell + \lambda_m} \mel{\phi_m}{G \wedge G}{\phi_\ell},
    \end{align}
    implying that
    \begin{align}
        \mel{\phi_m}{\tilde{F}}{\phi_\ell}
        &= \mel{\phi_m}{(\dd G - G \wedge G)}{\phi_\ell} \nonumber\\
        &= - \frac{2 \lambda_m}{\lambda_\ell + \lambda_m} \mel{\phi_m}{G\wedge G}{\phi_\ell}.
        \label{eq:F_dual_matrix_element}
    \end{align}
    Substituting the above relation into Eq.~\eqref{eq:F_dual_substitution} leads to Eq.~\eqref{eq:curvature_measure_spectral_decomposition}.
\end{proof}

For pure states \(\rho=\op{\psi}\), the scalar curvature is reduced to
\begin{align} \label{eq:scalar_curvature_pure}
    \mathcal{C} = \sum_{\mu\nu} \Im\ev{G_\mu G_\nu}{\psi}  \Im\ev{G^\mu G^\nu}{\psi},
\end{align}
which is easy to compute from the imaginary part of the quantum geometric tensor~\cite{Berry1989}.

We also give an alternative definition of the curvature measure \(\mathcal{C}_\rho\) for which the trace operation is replaced by the expectation value on the purification state,
that is,
\begin{align}
    \mathcal{C}_\rho := - \frac14 \ev{I \otimes F_{\mu\nu} F^{\mu\nu}}{\Psi}.
\end{align}
Using the dual relation Eq.~\eqref{eq:F_dual_relation}, we have
\begin{align}
    \mathcal{C}_\rho = - \frac14 \tr(\rho \tilde{F}_{\mu\nu} \tilde{F}^{\mu\nu}).
\end{align}
This curvature measure \(\mathcal{C}_\rho\) is also gauge invariant and reparametrization invariant, and vanishes if and only if the Uhlmann curvature vanishes.
Moreover, with the spectral decomposition \(\rho=\sum_{\ell=1}^r \lambda_\ell \op{\phi_\ell}\), using Eq.~\eqref{eq:F_dual_compenent} and Eq.~\eqref{eq:F_dual_matrix_element}, we have
\begin{align}
    \mathcal{C}_\rho &= - \frac14 \sum_{\mu\nu} \sum_{\ell m} \lambda_\ell \mel{\phi_\ell}{\tilde{F}_{\mu\nu}}{\phi_m} \mel{\phi_m}{\tilde{F}^{\mu\nu}}{\phi_\ell} \notag \\
    &= \sum_{\ell m} \frac{\lambda_\ell^2 \lambda_m}{(\lambda_\ell + \lambda_m)^2} \Theta_{\ell m},
\end{align}
where \(\Theta_{\ell m}\) is defined in Eq.~\eqref{eq:Theta}.
As \(\Theta_{\ell m}\) is symmetric with respect to the indices \(\ell\) and \(m\), we can symmetrize the above expression and obtain
\begin{align} \label{eq:curvature_measure_spectral_decomposition2}
    \mathcal{C}_\rho = \sum_{\ell m} \frac{\lambda_\ell \lambda_m}{2(\lambda_\ell + \lambda_m)} \Theta_{\ell m}.
\end{align}
Comparing Eq.~\eqref{eq:curvature_measure_spectral_decomposition} and Eq.~\eqref{eq:curvature_measure_spectral_decomposition2}, we can see that \(\mathcal{C}\geq \mathcal{C}_\rho\).

There exists another relevant quantity called the mean Uhlmann curvature (MUC)~\cite{Carollo2018}, which is a real and anti-symmetric tensor defined as
\begin{align}
    \mathcal U_{\mu\nu} &:= \ii \ev{I \otimes F_{\mu\nu}}{\Psi} = \ii \tr(\rho \tilde{F}_{\mu\nu}) \notag \\
    &= -\ii \tr(\rho [G_\mu, G_\nu]).
\end{align}
Due to its simple definition, the MUC is relatively easy to compute and has been subsequently used in the field of quantum metrology~\cite{Carollo2019a,Wang2026a} and quantum phase transitions~\cite{Carollo2020}.
Based on the MUC, Carollo \emph{et al.}~\cite{Carollo2019a} proposed a scalar measure of the Uhlmann curvature as the amount of incompatibility in quantum multiparameter estimation, which is given by
\begin{equation}
    R  := \frac12 \norm{\ii\, g^{-1} \mathcal U}_\infty,
\end{equation}
where \(\norm{\bullet}_\infty\) denotes the maximal eigenvalue.
Although the quantity \(R\) is also gauge invariant and reparametrization invariant, it is not a faithful measure of the Uhlmann curvature since it can vanish even when the Uhlmann curvature does not vanish.
The is because the MUC is a linear functional of the Uhlmann curvature, and thus it may vanish even when the Uhlmann curvature, which is an operator, does not vanish.

Since the quantity \(R\) is reparametrization invariant, we can choose a local orthonormal coordinate system such that \(g\) becomes the identity matrix and thus \( R = \frac12 \norm{\ii\, \mathcal U}_\infty\).
For such a case, we have
\begin{align}
    R^2 = \frac14 \norm{\mathcal U^\mathrm{T} \mathcal U}_\infty
    \leq \frac14 \tr(\mathcal U^\mathrm{T} \mathcal U),
\end{align}
where \(\mathrm{T}\) denotes the matrix transpose.
This motivates us to define a new scalar measure of the Uhlmann curvature as
\begin{align}
    \mathcal{C}_\mathrm{MUC} := \frac14 \mathcal U_{\mu\nu} \mathcal U^{\mu\nu}
    = - \frac14 \tr(\rho \tilde{F}_{\mu\nu}) \tr(\rho \tilde{F}^{\mu\nu}).
\end{align}
With the local orthonormal coordinate system such that \(g\) becomes the identity matrix, we have \(\mathcal{C}_\mathrm{MUC} = \frac{1}{4} \sum_{\mu\nu} (\mathcal U_{\mu\nu})^2 \).
Using \(\mathcal U_{\mu\nu} = -\ii \sum_{\ell} \lambda_\ell \mel{\phi_\ell}{[G_\mu, G_\nu]}{\phi_\ell}\) and the Cauchy-Schwarz inequality \(|\sum_\ell \lambda_\ell x_\ell|^2 \leq (\sum_\ell \lambda_\ell) (\sum_\ell \lambda_\ell |x_\ell|^2)\) with \(\sum_\ell \lambda_\ell=1\), we obtain
\begin{align}
    \mathcal{C}_\mathrm{MUC} &\leq \frac14 \sum_{\mu\nu} \sum_{\ell} \lambda_\ell \abs{\mel{\phi_\ell}{[G_\mu, G_\nu]}{\phi_\ell}}^2 \notag \\
    & \leq \sum_{\ell m} \frac{\lambda_\ell \lambda_m}{2(\lambda_\ell + \lambda_m)} \Theta_{\ell m} = \mathcal{C}_\rho,
\end{align}
where the second inequality follows by including the non-negative off-diagonal terms (\(\ell \neq m\)) from Eq.~\eqref{eq:curvature_measure_spectral_decomposition2}.
Since \(R\), \(\mathcal{C}_\mathrm{MUC}\), \(\mathcal{C}_\rho\), and \(\mathcal{C}\) are all reparametrization invariant, we obtain
\begin{align}
    R^2 \leq \mathcal{C}_\mathrm{MUC} \leq \mathcal{C}_\rho \leq \mathcal{C},
\end{align}
for any parameterization of the density operators.

\section{Quantum multiparameter estimation}\label{sec:quantum_multiparameter_estimation}

Since the Uhlmann connection leads to the Bures metric, which is equivalent to the QFI matrix, it is natural to expect that the Uhlmann curvature plays a fundamental role in quantum multiparameter estimation~\cite{Matsumoto1997,Matsumoto2005}.
Matsumoto~\cite{Matsumoto1997} proved that, for the faithful state model, the QCRB is attainable if and only if the Uhlmann curvature vanishes.
We now generalize this result to the case of general quantum states as follows.

In the context of quantum multiparameter estimation, the density operators are parameterized by \(x^\mu\).
The unknown parameters are encoded in the quantum state and the goal is to estimate these unknown parameters with high precision through measurements on the quantum system.
The covariance matrix $\mathcal E$ of any unbiased estimator and any quantum measurement is bounded below by the QCRB~\cite{Helstrom1967,Helstrom1968}:
\begin{align}
    \mathcal E \geq n^{-1}\mathcal K^{-1},
    \label{eq:QCRB}
\end{align}
where \(n\) is the number of experiment repetitions and \(\mathcal{K}\) is the QFI matrix.
The elements of the QFI matrix are defined as
\begin{align}
    \mathcal K_{\mu\nu} = \tr (\rho \frac{L_\mu L_\nu + L_\nu L_\mu}{2}),
\end{align}
where the symmetric logarithmic derivative (SLD) operator $L_\mu$ is determined by $\partial_\mu \rho = \frac{1}{2} (L_\mu \rho + \rho L_\mu)$.
It can be seen from Eq.~\eqref{eq:rhoG} that the SLD operator is proportional to the component of \(G\) as \(L_\mu = 2 G_\mu\).
Therefore, the QFI matrix is equivalent to the Bures metric defined by the Uhlmann connection up to a factor of 4, i.e., \(\mathcal K_{\mu\nu} = 4 g_{\mu\nu}\).

\subsection{Saturation of the QCRB}

Unlike the single-parameter case~\cite{Braunstein1994}, the QCRB is not always saturable in the multiparameter regime~\cite{Yuen1973}.
The attainability of the bound requires the existence of a single measurement scheme that is optimal for estimating all parameters simultaneously.
Up to now, the most informative condition on the attainability of the QCRB is the partial commutativity condition (PCC)~\cite{Yang2019b} given by
\begin{align}\label{eq:PCC}
    \ev{[L_\mu, L_\nu]}{\psi} = 0 \quad \forall \ket{\psi} \in \operatorname{supp}\rho \qand \forall \mu, \nu,
\end{align}
where \(\operatorname{supp}\rho\) denotes the support of \(\rho\).
The PCC is necessary for the saturation of the QCRB for any state $\rho$~\cite{Yang2019b} and sufficient when the states are pure~\cite{Matsumoto2005} or of full rank~\cite{Matsumoto1997}.
We shall show that the PCC is equivalent to the vanishing of the Uhlmann curvature.

\begin{theorem}
    The PCC is satisfied if and only if \(\mathcal C=0\).
\end{theorem}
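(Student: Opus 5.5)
The plan is to reduce both sides to the same finite set of matrix elements, namely $\mel{\phi_\ell}{[G_\mu,G_\nu]}{\phi_m}$ for $\ell,m=1,\dots,r$. Since $L_\mu = 2G_\mu$, the PCC in Eq.~\eqref{eq:PCC} reads $\ev{[G_\mu,G_\nu]}{\psi}=0$ for every $\ket{\psi}\in\operatorname{supp}\rho$. Fix $\mu,\nu$ and write $K:=[G_\mu,G_\nu]$, which is anti-Hermitian. Then $PKP$, with $P$ the projector onto $\operatorname{supp}\rho$, is also anti-Hermitian, hence normal. A normal operator whose expectation value vanishes on every vector of a subspace must vanish on that subspace, by polarization: a sesquilinear form determined by its diagonal values. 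So the PCC is equivalent to
\begin{equation*}
\mel{\phi_\ell}{[G_\mu,G_\nu]}{\phi_m}=0 \quad \text{for all } \ell,m\le r \text{ and all } \mu,\nu .
\end{equation*}

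Next I use the spectral formula of the preceding theorem, Eq.~\eqref{eq:curvature_measure_spectral_decomposition}. Both $\mathcal C$ and the PCC are reparametrization invariant: the PCC because $G_\mu$ transforms linearly as a covector, so vanishing of all commutators in one coordinate system persists in any other. I can therefore pass to local orthonormal coordinates in which $g_{\mu\nu}=\delta_{\mu\nu}$ and $G^\mu=G_\mu$. Because $K$ is anti-Hermitian, $\mel{\phi_m}{K}{\phi_\ell}=-\overline{\mel{\phi_\ell}{K}{\phi_m}}$, and hence
\begin{equation*}
\Theta_{\ell m}=\sum_{\mu\nu}\abs{\mel{\phi_\ell}{[G_\mu,G_\nu]}{\phi_m}}^2\ge 0 .
\end{equation*}
The weights $\lambda_\ell\lambda_m/(\lambda_\ell+\lambda_m)^2$ are strictly positive, since all $\lambda_\ell>0$ on the support. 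Therefore $\mathcal C$ is a sum of nonnegative terms with strictly positive coefficients, and it vanishes if and only if every $\Theta_{\ell m}=0$. That in turn holds if and only if all the support matrix elements of every commutator vanish, which is exactly the reformulated PCC.

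The argument needs some care at two points. First, the PCC is stated in the original coordinates, while the computation uses orthonormal ones. I handle this by noting that $[G'_\mu,G'_\nu]=J^\alpha_\mu J^\beta_\nu[G_\alpha,G_\beta]$ with $J$ invertible, so the vanishing of all compressed commutators is coordinate independent. Second, the polarization step requires the compression to the support. Vanishing of $\ev{K}{\psi}$ for $\psi$ in the support says nothing about $K$ outside it, but that is consistent with $\mathcal C$, which only sees the matrix elements within the support.

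Alternatively, one could argue via the equivalence $\mathcal C=0\Leftrightarrow F=0$ established earlier. By Eq.~\eqref{eq:F_dual_matrix_element}, $F=0$ is equivalent to the support matrix elements of $G\wedge G$ vanishing, which is again the PCC. I would mention this route as a consistency check.
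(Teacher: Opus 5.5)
Your proposal is correct and follows the paper's own route. You pass to local orthonormal coordinates, read off from the spectral formula that $\mathcal C=0$ exactly when all support matrix elements $\langle\phi_\ell|[G_\mu,G_\nu]|\phi_m\rangle$ vanish, and identify this with the PCC via $L_\mu=2G_\mu$. Your explicit proof that $\Theta_{\ell m}\ge 0$, the coordinate-independence of the PCC, and especially the polarization step fill in details the paper leaves implicit when it says the matrix-element condition ``is exactly the PCC.'' Polarization works for any operator on a complex space, so the appeal to normality is not needed.
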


\begin{proof}
    Since \(\mathcal C\) is invariant under reparametrization, we can choose a local orthonormal coordinate system such that \(g_{\mu\nu}\) becomes the identity matrix and thus \(G^\mu = G_\mu\).
    It then follows from Eq.~\eqref{eq:curvature_measure_spectral_decomposition}  that \(\mathcal C=0\) if and only if \(\mel{\phi_\ell}{[G_\mu, G_\nu]}{\phi_m} = 0\) for all pairs of parameters \((\mu,\nu)\) and all pairs of eigenvectors \((\ket{\phi_\ell}, \ket{\phi_m})\) in the support of \(\rho\).
    Since \(L_\mu = 2 G_\mu\), the above condition is equivalent to \(\mel{\phi_\ell}{[L_\mu, L_\nu]}{\phi_m} = 0\) for all pairs of parameters \((\mu,\nu)\) and all pairs of eigenvectors \((\ket{\phi_\ell}, \ket{\phi_m})\) in the support of \(\rho\), which is exactly the PCC.
\end{proof}

\subsection{Tradeoff for two-parameter estimation}

Another important question in quantum multiparameter estimation is how to identify the tradeoff boundary of the estimation precision when the QCRB is not attainable.
The trade-off boundary is difficult to determine directly; it is often delineated by using various inequalities to exclude the regions that cannot be attained.
Within these inequalities, the incompatibility is often characterized by the commutators of the SLD operators, which are closely related to the Uhlmann curvature.

For two-parameter estimation problems, the covariance matrix \(\mathcal E\) of any unbiased estimator and any quantum measurement must satisfy the following inequality~\cite{Hu2025b}:
\begin{equation} \label{eq:tradeoff_boundary}
    \sqrt{|n\mathcal E \mathcal K - I|} + \sqrt{(1 - \gamma)|n\mathcal E\mathcal K|} \geq 1,
\end{equation}
where \(|\bullet|\) represents the determinant, \(I\) denotes the \(2\times2\) identity matrix, and the incompatibility factor   \(\gamma\) is defined as
\begin{align} \label{eq:incompatibility_facror}
    \gamma = \frac{ \norm{\sqrt{\rho}[L_1, L_2]\sqrt{\rho}}_1^2}{4\,|\mathcal K|},
\end{align}
with $\norm{X}_1 \equiv \tr\sqrt{X^{\dagger}X}$ denoting the Schatten-1 norm of an operator \(X\).
For pure states, the inequality~\eqref{eq:tradeoff_boundary} is tight in the asymptotical regime and thus the tradeoff boundary can be completely identified by the above inequality.
Meanwhile, the incompatibility factor \(\gamma\) can be simplified to
\begin{align} \label{eq:gamma_pure}
    \gamma = \frac{ \abs{\ev{[L_1, L_2]}{\psi}}^2 }{4\,|\mathcal K|}
    = \frac{ \abs{\Im\ev{G_1 G_2}{\psi}}^2 }{|g|}.
\end{align}

The incompatibility factor \(\gamma\) in Eq.~\eqref{eq:incompatibility_facror} quantifies the degree of incompatibility between the optimal measurements for estimating the two parameters in mixed states.
The numerator in Eq.~\eqref{eq:incompatibility_facror} originates from a strengthened version of the measurement uncertainty relation~\cite{Ozawa2014}, where the SLD operators are regarded as the ideal observables one intends to measure.
It is worth noting the following relation:
\begin{align}
    \norm{\sqrt{\rho}[L_1, L_2]\sqrt{\rho}}_1
    &= \max_U \abs{\tr(\sqrt{\rho}[L_1,L_2]\sqrt{\rho} U)} \notag \\
    &\geq \abs{\tr(\rho [L_1,L_2])},
\end{align}
where the last term is equivalent to the mean Uhlmann curvature~\cite{Carollo2018,Carollo2019a,Carollo2020,Wang2026a} up to a constant factor.
The denominator in Eq.~\eqref{eq:incompatibility_facror} is proportional to the determinant of the QFI matrix, which is related to the volume of the uncertainty ellipsoid in the parameter space.
The role of the denominator is twofold.
On one hand, the denominator \(4\,|\mathcal K|\) normalizes the incompatibility factor to the range \([0,1]\); on the other hand, it makes the incompatibility factor independent of the parameterization.

\begin{theorem}
    For two-parameter estimation problems with pure states, the curvature measure \(\mathcal{C}\) is proportional to the incompatibility factor \(\gamma\), i.e., \(\gamma = \mathcal{C}/2\).
\end{theorem}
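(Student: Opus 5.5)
The plan is to start from the pure-state form of the scalar curvature, Eq.~\eqref{eq:scalar_curvature_pure}, which follows from the first theorem by setting \(r=1\) and \(\lambda_1=1\). Writing \(\Omega_{\mu\nu} := \Im\ev{G_\mu G_\nu}{\psi}\), this form reads \(\mathcal{C} = \Omega_{\mu\nu}\Omega^{\mu\nu}\) with indices raised by \(g^{\mu\nu}\). In the two-parameter case \(\Omega\) is a real antisymmetric \(2\times 2\) matrix, so it has only one independent entry: \(\Omega_{12} = -\Omega_{21} =: \omega\) and \(\Omega_{11}=\Omega_{22}=0\). The whole statement then reduces to a \(2\times2\) linear-algebra identity, which I will verify directly.

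The key step is to evaluate \(\Omega_{\mu\nu}\Omega^{\mu\nu} = g^{\mu\alpha}g^{\nu\beta}\Omega_{\alpha\beta}\Omega_{\mu\nu}\), which equals \(\tr(g^{-1}\Omega g^{-1}\Omega^{\mathrm T})\). Since \(\Omega = \omega J\) with \(J\) the standard \(2\times2\) symplectic matrix, I will use the identity \(J^{\mathrm T} M J = |M|\,M^{-\mathrm T}\), valid for any \(2\times2\) matrix \(M\). Applied with \(M=g^{-1}\), which is symmetric, this gives \(J^{\mathrm T} g^{-1} J = |g|^{-1} g\). Therefore
\begin{equation*}
\mathcal{C} = \omega^2 \tr(g^{-1} J g^{-1} J^{\mathrm T}) = \omega^2 \tr\bigl(g^{-1}\cdot |g|^{-1} g\bigr) = \frac{2\omega^2}{|g|}.
\end{equation*}
As a cross-check, one can instead pass to orthonormal coordinates, which is legitimate since \(\mathcal{C}\) and \(\gamma\) are both reparametrization invariant. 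There \(|g|=1\) and \(\mathcal C = \Omega_{12}^2+\Omega_{21}^2 = 2\omega^2\), in agreement.

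Finally, I compare with Eq.~\eqref{eq:gamma_pure}, which gives \(\gamma = \omega^2/|g|\). It follows that \(\gamma = \mathcal{C}/2\). One consistency check is needed here: the second equality in Eq.~\eqref{eq:gamma_pure} must agree with the normalization \(\mathcal K = 4g\) and \(L_\mu = 2G_\mu\). Indeed \(\ev{[L_1,L_2]}{\psi} = 8\ii\,\omega\) and \(4|\mathcal K| = 64|g|\), which gives \(\gamma = \omega^2/|g|\) as stated.

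The main obstacle is bookkeeping of factors rather than any conceptual step: I must track the \(-\tfrac14\) in the definition of \(\mathcal C\), the factor \(2\) from the two ordered index pairs \((1,2)\) and \((2,1)\), and the factor \(4\) relating \(\mathcal K\) to \(g\). The cleanest route is to rely on the already-derived pure-state formula, Eq.~\eqref{eq:scalar_curvature_pure}, rather than recomputing \(F\) from scratch.
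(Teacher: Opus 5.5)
Your proof is correct and is essentially the paper's argument. The paper's proof is exactly your cross-check: it passes to orthonormal coordinates, where Eq.~\eqref{eq:scalar_curvature_pure} gives \(\mathcal{C} = 2|\Im\ev{G_1 G_2}{\psi}|^2\), and then compares this with Eq.~\eqref{eq:gamma_pure}. Your main computation via \(J^{\mathrm T} M J = |M|\,M^{-\mathrm T}\) does the same thing in arbitrary coordinates, giving \(\mathcal{C} = 2\omega^2/|g|\) directly. Its only advantage is that it needs neither the reparametrization invariance of \(\mathcal{C}\) and \(\gamma\) nor coordinates with \(g=I\) at the point.
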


\begin{proof}
    Since the scalar curvature \(\mathcal{C}\) is invariant under reparametrization, we can choose a local orthonormal coordinate system such that \(g_{\mu\nu}\) becomes the identity matrix.
    For such a case, we have \(G^\mu = G_\mu\) and \(|g|=1\).
    From Eq.~\eqref{eq:scalar_curvature_pure}, it can be shown that the scalar curvature is given by \(\mathcal{C} = 2 |\Im\ev{G_1 G_2}{\psi}|^2\), which is exactly \(2\gamma\) according to Eq.~\eqref{eq:gamma_pure}.
\end{proof}

\section{Example}\label{sec:example}

We now consider a concrete example of a two-parameter estimation problem to illustrate the quantification of the Uhlmann curvature.
The parameterized density matrix is given by
\begin{align}\label{eq:rho_th12_matrix}
    \rho=\frac{1}{2}
    \mqty(
        1 & \ee^{-\ii a - b } \\
        \ee^{\ii a - b} & 1
    )
\end{align}
with the standard basis, where \(a\) and \(b\) are two real parameters of interest, i.e., \(x^1=a\) and \(x^2=b\).
This model can be interpreted as the joint estimation of phase shift $a$ and phase diffusion $\sqrt{b}$~\cite{Vidrighin2014}.

After some algebra, we obtain the matrix representations of the \(G_\mu\) operators for $a$ and $b$ as
\begin{align}\label{eq:L1_L2}
    G_1 &= \frac12 \mqty(
        0 & -\ii \ee^{-b-\ii a} \\
        \ii \ee^{-b+\ii a} & 0
    ),\\
    G_2 &= \frac{1}{2(\ee^{2 b}-1)}\mqty(
        1 & -\ee^{b-\ii a} \\
        -\ee^{b+\ii a} & 1
    ).
\end{align}
The Bures matrix is then given by
\begin{align}
    g = \frac14 \mqty(
        \ee^{-2 b} & 0 \\
        0 & \frac{1}{\ee^{2 b}-1}
    ).
\end{align}
Substituting the explicit forms of $G_1$ and $G_2$ into Eq.~\eqref{eq:F_dual_compenent}, we obtain the dual curvature \(\tilde{F}_{12}\) as
\begin{align}
    \tilde{F}_{12}
    = \frac{-\ii}{2\left(\ee^{2 b}-1\right)}\left(
        \begin{array}{cc}
            1 & -\ee^{-b-\ii a} \\
            \ee^{-b+\ii a} & -1
    \end{array}\right).
\end{align}
As the Bures metric is diagonal, the curvature measure \(\mathcal{C}\) can be calculated by directly contracting the indices of \(\tilde{F}_{12}\) with the inverse metric \(g^{\mu\nu}\).
The resulting curvature measure is given by
\begin{align}
    \mathcal{C} = -\frac1{4 g_{11} g_{22}} \qty(
        \tr(\tilde{F}_{12}\tilde{F}_{12}) + \tr(\tilde{F}_{21}\tilde{F}_{21})
    ) =4.
\end{align}
This indicates that the manifold of density operators in this example is constantly curved and thus the QCRB cannot be saturated.
Since \(\mathcal C\) is invariant under reparametrization, the curvature measure \(\mathcal C=4\) remains unchanged under any smooth reparametrization of the parameters \(a\) and \(b\).
In other words, the constant curvature property of the manifold of density operators is an intrinsic geometric property of this model.
The state-weighted scalar measures of the Uhlmann curvature for this example is \(\mathcal{C}_\rho=2\).
Moreover, since the MUC tensor for this example is all-zero, the MUC-based scalar measures, namely \(R\) and \(\mathcal{C}_\mathrm{MUC}\), are zero and thus cannot capture the nontrivial curvature information of this model.

\section{Conclusion} \label{sec:conclusion}

In conclusion, we have proposed a scalar curvature measure to quantify the Uhlmann curvature for mixed quantum states and established its connection to the measurement incompatibility in quantum multiparameter estimation.
The proposed curvature measure is gauge invariant, reparametrization invariant, and vanishes if and only if the Uhlmann curvature vanishes.
We have shown that the vanishing of the curvature measure is equivalent to the PCC for the saturation of the QCRB in quantum multiparameter estimation problems.
For two-parameter estimation problems with pure states, we have shown that the curvature measure is proportional to the incompatibility factor in the tradeoff boundary of estimation precision.

We hope that our work can provide insights into the geometric structure of quantum state space and its implications in quantum multiparameter estimation.
As the scalar curvature is of the same form as the Yang-Mills action density in non-Abelian gauge theory, it would be interesting to explore the physical implications of such a curvature measure in quantum estimation theory and quantum information geometry.
Particullarly, it would be interesting to explore the prospects of using the scalar curvature in some rigorous lower bounds for an overall estimation error in multiparameter quantum estimation problems.
Meanwhile, the (operator-valued) Uhlmann curvature is conjectured to contain complete information about the tradeoff boundary of the estimation precision.
Lastly, we hope a deeper understanding of the Uhlmann curvature can provide new insights into the design of optimal measurement strategies that achieve the tradeoff boundary in quantum multiparameter estimation problems.

\begin{acknowledgments}
    This work is supported by the Quantum Science and Technology-National Science and Technology Major Project (Grant No.~2024ZD0301000), the National Natural Science Foundation of China (Grant Nos.~92476118 and 12275062), and Key R\&D Program of Zhejiang (Grant No. 2026C01004).
\end{acknowledgments}

\bibliography{reference}

\end{document}